\newcommand{\MyMapTemplatePrefix}[4]{\expandafter#1\csname#3#4\endcsname{#2{#4}}}
\newcommand{\MyMapTemplatePrefixNew}[5]{\expandafter#1\csname#4#5\endcsname{#2{#3{#5}}}}
\newtheorem{thm}{Theorem}[section]
\newtheorem{defn}{Definition}  
\title{Going Deeper in Frequency Convolutional Neural Network: A Theoretical Perspective}
\author{%
   Xiaohan Zhu\\
   Nanjing University of \\
   Science and Technology\\
   \texttt{zxhan784@njust.edu.cn} \\
   \And
   Zhen Cui \\
   Nanjing University of \\
   Science and Technology\\
   \texttt{zhen.cui@njust.edu.cn} \\
   \AND
   Tong Zhang \\
   Nanjing University of \\
   Science and Technology\\
   \texttt{tong.zhang@njust.edu.cn} \\
   \And
   Yong Li \\
   Nanjing University of \\
   Science and Technology\\
   \texttt{yong.li@njust.edu.cn} \\
   \And
   Jian Yang \\
   Nanjing University of \\
   Science and Technology\\
   \texttt{csjyang@mail.njust.edu.cn} \\
}
\begin{document}

\maketitle
\begin{abstract}	
	Convolutional neural network (CNN) is one of the most widely-used successful architectures in the era of deep learning. However, the high-computational cost of CNN still hampers more universal uses to light devices. Fortunately, the Fourier transform on convolution gives an elegant and promising solution to dramatically reduce the computation cost. Recently, some studies devote to such a challenging problem and pursue the complete frequency computation without any switching between spatial domain and frequent domain. In this work, we revisit the Fourier transform theory to derive feed-forward and back-propagation frequency operations of typical network modules such as convolution, activation and pooling. Due to the calculation limitation of complex numbers on most computation tools, we especially extend the Fourier transform to the Laplace transform for CNN, which can run in the real domain with more relaxed constraints. This work more focus on a theoretical extension and discussion about frequency CNN, and lay some theoretical ground for real application.
\end{abstract}

\section{Introduction}
In the decade, numerous convolutional neural networks (CNNs) are proposed and applied to various computer vision tasks such as image classification and face recognition \cite{ref2, ref5}. 
One of the primary challenges of CNNs is the high-expensive computation cost in the training and inference stages. In particular, the efficient implementation of convolutional kernels has been a key ingredient of popular CNNs at scale. To speed up, most CNNs need the support of high-efficient GPU calculation, which hampers more universal uses to those light resource-limited devices.

The computation burden of CNNs is largely dominated by convolution layers, as illustrated in \cite{ref25, ref26}. To reduce the computation time of the convolutional operations, a number of studies have explored various efficient computation models \cite{ref10, ref13}. A promising approach for fast training and inference is to exploit the duality between spatial /frequency domain computation by the Fourier transforms. Due to its efficiency and the potential for amortization of cost, the Fourier transform has long been viewed as a natural alternative for fast convolution \cite{ref22}, e.g., Mathieu et al performed convolution in the spectral domain and recover the feature map with the inverse Fourier transform after each convolutional layer. However, such approaches merely replaced the operations inside the convolution layer \cite{ref6} or pooling layer \cite{ref7}, they usually require computationally-intensive Fourier Transforms (FT) and inverse FT at the boundary of every layer.

Although FT accelerates the computation of convolution unit, the entire network still performs part of the operations in the spatial domain. In summary, two challenges exist in previous work: 
1) they are not capable of thorough frequency transform for the neural network, e.g., some critical components such as non-linear activation function (ReLU, Sigmoid) still perform in the spatial domain, and the gradient back-propagation is difficult in the frequency domain; 
2) they exploit the complex-coefficient spectral parameterization of the convolutional filters, which is not programming friendly.

To cope with this issue, in this paper, we theoretically study the frequency operation of convolution neural network. First, we revisit the Fourier transform theory to derive feed-forward and back-propagation frequency operations of basic network components such as convolutional computation as well as pooling, Sigmoid, ReLU and fully connected computation, which enables an end-to-end frequency network. The complex number space in the Fourier Transform makes it difficult to be deployed onto the most popular computation tools. To address this problem, we extend the Fourier transform to the Laplace transform for convolution neural network. In theory, the Laplace frequency operations can be completely performed in the real space, and the satisfactory condition of Laplace transform is more relaxed than the Fourier transform. In summary, we attempt to build the deeper theoretical fundament for the complete frequency convolution neural network in the complex or real number domain. At the same time, we expect the theoretical extension and discussion about frequency CNN can benefit the essential solution to accelerate the training and inference of convolution neural network. 

\section{Related work}
\label{RW}
There are multiple works based on the Fourier transform and Laplace transform, together with their variants. Specifically, there have been a part of works which attempt to construct neural network units in frequency domain. Below, we introduce the Fourier transformation together with its variants, as well as some previous works which attempted to accelerate the inference of deep neural networks by designing frequency-domain deep networks.

FT is a linear integral transformation used for signal in the spatial/frequency domain between the transformation, and there are various variations in different applications \cite{ref3}. Convolution in spatial domain is equivalent to point-wise multiplication in frequency domain, which has been known for a long time. And the relevant research and application may simply propose the product substitution convolution operation in the frequency domain yet, rather than transforming the entire convolutional neural networks (CNNs) into the frequency domain. \cite{ref6} shows the feasibility of using simpler point-wise multiplication to replace convolution in the frequency domain, but this method only replaces the operation inside the convolution layer, requiring computationally intensive fast Fourier transform(FFT) and inverse FFT at the boundary of each layer; 

\cite{ref10} proposed Fourier Convolutional Neural Network(FCNN) to speed up the training time without reducing the effectiveness, but only involves convolution and pooling of frequency domain, where the pooling is realized by frequency truncation which is similar to the method in \cite{ref7}; By parameterizing the integral kernel directly in the Fourier space~\cite{ref19}, a new type of neuron operator is developed, which enables an efficient and expressive architecture for faster and more accurate solution of partial differential equations. \cite{ref20} designed Deep Fourier Channel Attention Network (DFCAN) to solve the problem of image super-resolution in optical microscopy. 
Inspired by this work, we propose to map the whole operation of CNNs into the frequency domain in theory.

Based on FFT, \cite{ref7} proposed the spectrum pooling method, which performs dimension reduction by truncating the representation in the frequency domain, and further promotes the realization of CNNs in the frequency domain. The work in \cite{ref22} further extends the concept of spectrum pooling by mapping the operation and parameters of the entire convolution into the frequency domain. Based on discrete cosine transform(DCT), \cite{ref9} conducted spectral convolution for getting more diverse feature of information.By extracting features in spatial and frequency domain for fusion, frequency domain learning can also be used as a prior for network compression and pruning \cite{ref13, ref14, ref15}. FreshNets are developed by \cite{ref16}, which use DCT to convert filter weights to the frequency domain. 
Through the DCT, the channel attention mechanism was reconsidered by \cite{ref21} for frequency domain analysis. It was proved mathematically that the traditional global average pooling was a special case of feature decomposition in the frequency domain.

Different from all the works above, in this paper, we attempt to build an end-to-end neural network in frequency domain based on the Fourier transform and Laplace transform.

%

\section{Spectral representation}\label{sec:spectral}

Transforming convolutional network to frequency domain can not only reduce high-computational burden, but also compact model parameters through flexible spectrum selection. The most classic calculation is the Fourier transform. Given an input function $f(x)$, the existence of the Fourier transform is conditioned by absolutely integrable property, i.e., $\int_{-\infty}^\infty |f(x)|<\infty$, and the piece-wise smoothness of $f(x)$. To convert full CNN into frequency domain, below we provide the detailed theoretical derivation and dissection in the functional space for those basic network modules.

\subsection{Convolution}
In the spatial domain, given two functions $f_{1}$ and $f_{2}$ on the one-dimension signal $x$, the convolution can be defined as follows.

\begin{defn}\label{dfn:1d_conv}
	Given two piece-wise smooth and absolutely integrable functions $f_{1}(x)$ 
	and $f_{2}(x)$ defined in the interval of $ (-\infty ,\infty) $,
	their convolution is defined as:
	\begin{equation}
	f_{1}( x)\ast f_{2}(x)=\int_{- \infty}^{\infty}f_{1}(\xi)f_{2}( x - \xi )d\xi.\label{eqn:1d_conv}
	\end{equation}
\end{defn}
in which * means convolution operation. 

Based on the definition above, here, we derive the spectral representation of the convolution in frequency domain. First, let $f(x)$ represent a piece-wise smooth and absolutely integrable function on the signal $x$, its one-dimensional Fourier transform can be written as:
\begin{equation}\label{eqn:Fconv}
F(\omega )=\int_{-\infty}^{\infty}f(x)e^{-i \omega x}dx,
\end{equation}
\begin{equation}\label{eqn:reFconv}
f(x)=\frac{1}{2 \pi}\int_{-\infty}^{\infty}F(\omega )e^{i \omega x}d\omega,
\end{equation}
where $F( \omega )$ is the frequency signal after Fourier transform, $i$ is the imaginary unit, $ \omega $ is the frequency spectrum, and $f(x)\leftrightarrow F(w)$ represents the transformation and inverse transformation between the signal and the spectrum.


Then, based on Eqn.~(\ref{eqn:Fconv}),~(\ref{eqn:reFconv}) and Definition \ref{dfn:1d_conv}, we derive the following Theorem~\ref{thm:conv}, where the detailed proof can be found in the supplemental material (Appendix A.1). 
\begin{thm} \label{thm:conv}
	The convolution of two functions is equivalent to taking the dot product in the frequency domain, formally,
	\begin{equation}
	f_{1}(x)\ast f_{2}(x)\leftrightarrow F_{1}( \omega )F_{2}( \omega ),
	\end{equation}
	in which $ F_{1}( \omega ) $ and $ F_{2}( \omega ) $ are the Fourier transforms of $ f_{1}( x ) $ and $ f_{2}( x ) $, respectively. And $ \leftrightarrow $ is the equivalent representations of transformation and its inverse.
\end{thm}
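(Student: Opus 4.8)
The plan is to verify the statement by computing the Fourier transform of the left-hand side $f_1 \ast f_2$ directly and showing that it collapses into the product $F_1(\omega)F_2(\omega)$; by the inversion formula~\eqref{eqn:reFconv} this is exactly the transform-pair equivalence $\leftrightarrow$ that the theorem asserts. First I would insert the convolution of Definition~\ref{dfn:1d_conv} into the transform~\eqref{eqn:Fconv}, producing the iterated integral
\begin{equation}
\int_{-\infty}^{\infty}\left(\int_{-\infty}^{\infty} f_1(\xi)\, f_2(x-\xi)\, d\xi\right) e^{-i\omega x}\, dx.
\end{equation}

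Next I would interchange the order of integration, perform the substitution $u = x - \xi$ (so that $dx = du$) in the inner integral, and split the kernel as $e^{-i\omega x} = e^{-i\omega \xi}\,e^{-i\omega u}$. Because the factor depending on $\xi$ and the factor depending on $u$ separate cleanly, the double integral decouples into a product of two single integrals,
\begin{equation}
\left(\int_{-\infty}^{\infty} f_1(\xi)\, e^{-i\omega \xi}\, d\xi\right)\left(\int_{-\infty}^{\infty} f_2(u)\, e^{-i\omega u}\, du\right) = F_1(\omega)\, F_2(\omega),
\end{equation}
each factor being precisely the Fourier transform in~\eqref{eqn:Fconv}. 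This yields $\mathcal{F}[f_1 \ast f_2](\omega) = F_1(\omega)F_2(\omega)$, and the inversion formula then gives the bidirectional $\leftrightarrow$.

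The main obstacle — indeed essentially the only step requiring care — is rigorously justifying the interchange of the two integrations, and this is exactly where the hypotheses of Definition~\ref{dfn:1d_conv} are needed. Since $f_1$ and $f_2$ are assumed absolutely integrable, I would first bound the integrand in absolute value and observe, via the same change of variables, that $\int_{-\infty}^{\infty}\int_{-\infty}^{\infty}|f_1(\xi)||f_2(x-\xi)|\,d\xi\,dx = \left(\int_{-\infty}^{\infty}|f_1|\right)\left(\int_{-\infty}^{\infty}|f_2|\right) < \infty$. This finiteness invokes Fubini's theorem to license the swap. I would also note that the piecewise-smoothness assumption ensures the inversion formula~\eqref{eqn:reFconv} is valid, so that the factored frequency-domain product genuinely corresponds back to $f_1 \ast f_2$ in the spatial domain, completing the transform pair.
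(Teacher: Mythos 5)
Your proposal is correct and follows essentially the same route as the paper's own proof in Appendix A.1: insert the convolution into the transform, swap the order of integration, substitute $y = x-\xi$, split the kernel as $e^{-i\omega x}=e^{-i\omega\xi}e^{-i\omega y}$, and factor the double integral into $F_1(\omega)F_2(\omega)$. The only difference is that you explicitly justify the interchange of integrals via Fubini's theorem using the absolute integrability of $f_1$ and $f_2$, a step the paper performs silently; this is a welcome addition rather than a divergence.
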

According to Theorem~\ref{thm:conv}, the Fourier transformation of convolution in spatial domain is equivalent to the point-wise product in the frequency domain. Therefore, we can derive the feed-forward frequency operation of convolution by first transforming both the input signal and convolutional filter into frequency domain, and then conduct element-wise multiplication.  

To enable the end-to-end frequency neural network with thorough frequency transform, we derive the back-propagation frequency convolution based on the Theorem of derivative about convolution. It's can be found in Appendix A.2. 
Furthermore, for the derivative of higher order, the corresponding back-propagation in frequency domain can be obtained by simply multiplying the higher-order power of $i\omega$.

We further extend the convolution in frequency domain from one-dimensional signal processing to multidimensional signal transformation. Here, we take the 2-dimensional input as the representative, and $ f_{1}(x,y) $ and $ f_{2}(x,y) $ are the function of two-dimension inputs. First, the definition of 2-dimensional convolution in spatial domain is as follows.
\begin{defn} \label{dfn:2d_conv}
	The 2-dimensional function which defined in the interval of $ ( -\infty ,\infty ) $, their convolution is defined as:
	\begin{equation}
	f_{1}(x,y)\ast f_{2}(x,y)=\int_{-\infty}^{\infty}\int_{-\infty}^{\infty}f_{1}(\xi _{x},\xi _{y})f_{2}(x- \xi _{x},y- \xi _{y}).
	\end{equation} \label{eqn:2d_conv}
\end{defn}

Then, the Fourier and inverse Fourier transform can be written as follows:
\begin{equation}
F(u, v)=\int_{-\infty}^{\infty}\int_{-\infty}^{\infty}f(x, y)e^{-i2\pi (ux+vy)}dxdy,
\end{equation}
\begin{equation}
f(x, y)=\int_{-\infty}^{\infty}\int_{-\infty}^{\infty}F(u, v)e^{i2\pi (ux+vy)}dudv.
\end{equation}
Here, $f(x, y)$ represents the signal of input, $ F( u, v ) $is the frequency signal after Fourier transform, $ u, v $ is the frequency spectrum, and $f(x, y)\leftrightarrow F(u, v)$ represents the transformation and inverse transformation between the signal and the frequency.

Similar with the back-propagation derivation of one-dimension convolution in frequency domain, the back-propagation of two-dimensional convolution in frequency domain can be derivative as the following theorem.
\begin{thm}\label{thm:2D_conv_derivate}
	The derivative of convolution in the spatial domain is equivalent to doing the operation of $ i2\pi F( u, v) $ in the frequency domain, namely $ \frac{\partial f(x, y)}{\partial x}\leftrightarrow i2\pi F(u, v) $.
\end{thm}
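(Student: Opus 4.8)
The plan is to establish the differentiation property of the two-dimensional Fourier transform directly from the transform pair, paralleling the one-dimensional derivation referenced in Appendix~A.2. I would work from the inverse transform and differentiate under the integral sign, since that route makes the spectral factor appear immediately from the exponential kernel. Concretely, starting from $f(x,y)=\int_{-\infty}^{\infty}\int_{-\infty}^{\infty}F(u,v)e^{i2\pi(ux+vy)}\,du\,dv$ and applying $\partial/\partial x$ to both sides, the derivative acts only on the kernel $e^{i2\pi(ux+vy)}$, pulling down a factor $i2\pi u$. This gives $\frac{\partial f(x,y)}{\partial x}=\int_{-\infty}^{\infty}\int_{-\infty}^{\infty}(i2\pi u)\,F(u,v)e^{i2\pi(ux+vy)}\,du\,dv$, so by uniqueness of the transform pair the spatial derivative corresponds in the frequency domain to $i2\pi u\,F(u,v)$.

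As an independent check, and a more robust route analytically, I would verify the same identity from the forward transform using integration by parts in the $x$ variable with $y$ held fixed: $\int_{-\infty}^{\infty}\frac{\partial f}{\partial x}e^{-i2\pi ux}\,dx=\bigl[f(x,y)e^{-i2\pi ux}\bigr]_{-\infty}^{\infty}+i2\pi u\int_{-\infty}^{\infty}f(x,y)e^{-i2\pi ux}\,dx$. The boundary term vanishes because absolute integrability together with piece-wise smoothness forces $f(x,y)\to 0$ as $x\to\pm\infty$, so after integrating the remaining term over $y$ one recovers exactly $i2\pi u\,F(u,v)$. Both derivations thus agree, and the statement is the natural 2D analogue of the $\frac{\partial f(x)}{\partial x}\leftrightarrow i\omega F(\omega)$ rule used for the one-dimensional case.

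The main obstacle is the analytic justification of the interchange step — differentiation under the integral sign in the first route, or equivalently the decay needed to kill the boundary term in the second. This is not a new idea but a dominated-convergence type argument resting on the hypotheses that $f$ and $\partial f/\partial x$ be absolutely integrable and piece-wise smooth; once that regularity is granted, the algebra is immediate.

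I would also flag one point of bookkeeping: the clean statement carries the spectral variable $u$ as a multiplier, i.e. the precise correspondence is $\frac{\partial f(x,y)}{\partial x}\leftrightarrow i2\pi u\,F(u,v)$, with the symmetric statement $\frac{\partial f(x,y)}{\partial y}\leftrightarrow i2\pi v\,F(u,v)$ obtained by the same argument in the $y$ variable. Higher-order and mixed derivatives then follow by iteration, each application of $\partial/\partial x$ contributing an additional factor of $i2\pi u$, which is exactly the mechanism the paper invokes for back-propagation of higher-order derivatives in the frequency domain.
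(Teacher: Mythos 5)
Your second route --- integration by parts in $x$ with $y$ held fixed, killing the boundary term by absolute integrability --- is exactly the paper's proof in Appendix A.3, so in substance you have reproduced the intended argument; the first route via differentiating the inverse transform under the integral sign is a standard alternative that trades the boundary-term argument for a dominated-convergence justification, and both are fine. The one point worth emphasizing is your ``bookkeeping'' remark, which is not merely cosmetic: differentiating $e^{\pm i2\pi(ux+vy)}$ in $x$ produces the factor $i2\pi u$, so the correct correspondence is $\frac{\partial f(x,y)}{\partial x}\leftrightarrow i2\pi u\,F(u,v)$, whereas both the theorem statement and the paper's own integration-by-parts computation drop the spectral variable $u$ (the appendix writes $+\,i2\pi\int f\,e^{-i2\pi(ux+vy)}dx$ where $+\,i2\pi u\int f\,e^{-i2\pi(ux+vy)}dx$ is needed). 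Your version is the right one, and it is also the only version consistent with the one-dimensional rule $f'(x)\leftrightarrow i\omega F(\omega)$ that the paper itself proves in Appendix A.2, where the multiplier correctly carries the frequency variable $\omega$.
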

The proof of Theorem~\ref{thm:2D_conv_derivate} is in the supplemental material (Please see Appendix A.3).

Based on Theorem~\ref{thm:2D_conv_derivate}, the 2D convolution of thorough frequency transform can be derived. However, it should be noted that the calculation of mapping requires that the size of the convolution kernel should be consistent with the size of the input feature maps. In order to carry out the convolution point-wise product operation in the frequency domain, to the extent of avoiding excessive memory consumption, we need to reasonably design zero-padding or interpolation filling to reduce the occupation.

\subsection{Activation}

The activation function is introduced to increase the non-linearity of the neural network model. Currently, there are many widely used non-linear activation functions, e.g. ReLU, Sigmoid, tanh, etc. Here, we mainly deduce the ReLU and Sigmoid activation functions in the frequency domain. 

\paragraph{Sigmoid}
The Sigmoid function maps a real number to the interval (0,1), and is frequently used for binary classification. As a non-linear activation function, it is smooth, strictly monotonous, and easy to differentiate. However, due to the large amount of computation, the gradient may easily disappear during backpropagation. Formally, the Sigmoid function has the following form:
\begin{equation}\label{eqn:Sigmoid}
S(x)=\frac{1}{1+e^{-x}}=\frac{e^{x}}{e^{x}+1}.
\end{equation}

Mathematically, it satisfies the absolutely integrable requirement $\int_{-\infty }^{\infty }| f( t ) |< \infty$, and piece-wise smooth. Here, we derive the Fourier transform of Sigmoid function by giving the following Theorem.
\begin{thm}
	The Fourier transform of the Sigmoid activation function is $ S(x) \leftrightarrow \frac{ie^{x-i \omega x}}{ \omega +i}\cdot _{2}F_{1}( 1,1-i \omega; 2-i \omega;-e^{x}) $, and its derivative exists.
\end{thm}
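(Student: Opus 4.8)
The plan is to evaluate the Fourier integral of the Sigmoid directly and to recognize the result as an Euler-type integral representation of the Gauss hypergeometric function. Starting from the definition in Eqn.~(\ref{eqn:Fconv}) applied to Eqn.~(\ref{eqn:Sigmoid}), I would write
\begin{equation}
F(\omega)=\int_{-\infty}^{\infty}\frac{1}{1+e^{-x}}\,e^{-i\omega x}\,dx=\int_{-\infty}^{\infty}\frac{e^{x}}{e^{x}+1}\,e^{-i\omega x}\,dx,
\end{equation}
and then perform the substitution $u=e^{x}$, so that $du=e^{x}\,dx$ and $e^{-i\omega x}=u^{-i\omega}$. This clears the exponentials and collapses the integrand onto the single-parameter rational-times-power kernel
\begin{equation}
\int\frac{u^{-i\omega}}{1+u}\,du,
\end{equation}
which is the form best suited to a special-function identity.

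The key step is to invoke the Euler integral identity $\int_{0}^{z}\frac{t^{a-1}}{1+t}\,dt=\frac{z^{a}}{a}\, _{2}F_{1}(1,a;a+1;-z)$ with the parameter choice $a=1-i\omega$ and upper limit $z=e^{x}$. Substituting back and using $(e^{x})^{1-i\omega}=e^{x-i\omega x}$ together with the algebraic simplification $\frac{1}{1-i\omega}=\frac{i}{\omega+i}$ then reproduces exactly the claimed closed form $\frac{i e^{x-i\omega x}}{\omega+i}\, _{2}F_{1}(1,1-i\omega;2-i\omega;-e^{x})$. For the assertion that the derivative exists, I would appeal to the analyticity of $_{2}F_{1}$ in its argument away from the branch cut, so the stated expression is smooth in both $x$ and $\omega$; additionally, the spatial derivative $S'(x)=S(x)(1-S(x))$ transforms under the derivative theorem (the one-dimensional analogue of Theorem~\ref{thm:2D_conv_derivate}) to $i\omega F(\omega)$, giving an explicit differentiable frequency representation suitable for back-propagation.

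The main obstacle I would scrutinize is convergence, not the computation. The Sigmoid is \emph{not} absolutely integrable on $(-\infty,\infty)$, since $S(x)\to 1$ as $x\to+\infty$, so $\int|S(x)|\,dx$ diverges and the Fourier integral fails to exist in the classical Lebesgue sense, contrary to the premise stated just before the theorem. This is precisely why the claimed transform still carries an explicit $x$-dependence: the expression is really the antiderivative (an indefinite integral $\int_{0}^{e^{x}}$ after substitution) rather than a genuine definite integral over the whole line, whose boundary value at $u\to\infty$ diverges because $\Re(1-i\omega)=1$ sits on the edge of the convergence strip. A rigorous version would therefore either interpret the transform distributionally or insert a regularizing factor $e^{-\varepsilon|x|}$ and take $\varepsilon\to 0^{+}$; the hypergeometric identity above is the honest computational core, and the delicate part is making the limits of integration and the resulting boundary terms precise.
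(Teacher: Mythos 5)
Your computational core is the same as the paper's: the paper also writes $F(\omega)=\int_{-\infty}^{\infty}\frac{e^{(1-i\omega)x}}{e^{x}+1}\,dx$ and then simply asserts the hypergeometric series $\frac{ie^{x-i\omega x}}{\omega+i}\bigl(1+\frac{1\cdot(1-i\omega)}{1!(2-i\omega)}(-e^{x})+\cdots\bigr)$ with no intermediate justification. What you add --- the substitution $u=e^{x}$ and the Euler-type identity $\int_{0}^{z}\frac{t^{a-1}}{1+t}\,dt=\frac{z^{a}}{a}\,_{2}F_{1}(1,a;a+1;-z)$ with $a=1-i\omega$, together with the simplification $\frac{1}{1-i\omega}=\frac{i}{\omega+i}$ --- is precisely the missing derivation, and it checks out: expanding $\frac{(a)_{n}}{(a+1)_{n}}=\frac{a}{a+n}$ reproduces the paper's series term by term. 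More importantly, your convergence critique is correct and identifies a genuine defect in the theorem itself, one the paper's proof shares rather than resolves: $S(x)\to 1$ as $x\to+\infty$, so $\int_{-\infty}^{\infty}|S(x)|\,dx$ diverges and the premise of absolute integrability stated just before the theorem is false. The telltale symptom is exactly the one you name: a Fourier transform is a definite integral over all $x$ and cannot depend on $x$, yet the claimed closed form does. What the paper actually computes is the antiderivative $\int_{-\infty}^{x}\frac{e^{(1-i\omega)\xi}}{e^{\xi}+1}\,d\xi$ (your $\int_{0}^{e^{x}}$ after substitution), whose limit as $x\to+\infty$ does not exist classically because the integrand tends to the non-decaying oscillation $e^{-i\omega x}$. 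Your proposed fixes --- a distributional interpretation (which would produce a $\delta(\omega)$ contribution, in the spirit of the paper's own Theorem 3.4 for the Heaviside function) or a regularizer $e^{-\varepsilon|x|}$ with $\varepsilon\to 0^{+}$ --- are the right ways to make the statement meaningful; the paper does neither. In short, your proposal is a strictly more honest version of the paper's argument: same route, with the omitted identity supplied and the unaddressed convergence failure correctly diagnosed.
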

\begin{proof}
	\begin{align}
	&F( \omega )=\int_{-\infty }^{\infty }S( x )e^{-i \omega x}dx= \int_{-\infty }^{\infty }\frac{e^{(1 -i \omega )x}}{e^{x}+1}dx
	\\&=\frac{ie^{x-i \omega x}}{ \omega +i}( 1+ \frac{1\cdot ( 1-i \omega )}{1!( 2-i \omega )}\cdot ( -e^{x} )^{1}+\frac{1\cdot ( 1+1 )( 1-i \omega )( 1-i \omega +1 )}{2!( 2-i \omega )( 2-i \omega +1 )}\cdot ( -e^{x} )^{2}+ \cdots)
	\\&=\frac{ie^{x-i \omega x}}{ \omega +i}\cdot _{2}F_{1}( 1,1-i \omega; 2-i \omega;-e^{x}),
	\end{align}
	in which spectral range $\omega \in ( -\infty ,\infty )$, and
	$ _{2}F_{1}( a,b; c;z ) $ is hypergeometric function, specifically 
	\begin{equation}\label{eqn:hyper_func}
	_{2}F_{1}(a,b;c;z)=1+\frac{ab}{1!c}z+\frac{a(a+1)b(b+1)}{2!c(c+1)}z^{2}+\cdots=\sum_{n=0}^{\infty}\frac{(a)_{n}(b)_{n}}{(c)_{n}}\frac{z^{n}}{n!}.
	\end{equation} 
	Its derivative is as follows:
	\begin{equation}
	\frac{\partial }{\partial x}( \frac{e^{-i \omega x}}{e^{-x}+1} )= \frac{e^{x-i \omega x}( 1-i \omega ( e^{x}+1 ) )}{( e^{x}+1 )^{2}}.	
	\end{equation}
\end{proof}

\paragraph{ReLU}
Different from the non-linear Sigmoid function, ReLU is linear which makes it less vulnerable to the gradient explosion problem. Currently, ReLU is widely used as an activation function in deep learning. Typically, the definition of ReLU activation function can be written
as follows:
\begin{equation} \label{eqn:ReLU}
f(x)=
\begin{cases}
0, & \text{ if } x< 0, \\
x, & \text{ if } x\geq 0.
\end{cases}
\end{equation}
Suppose that $x$ belongs to $(0,k)$, then the ReLU activation function will satisfy the absolutely integrable and piece-wise smooth requirements:
$ \int_{-\infty }^{\infty }| f( x ) |< \infty $.
Accordingly, we give the following Theorem.
\begin{thm}
	The Fourier transform of ReLU activation function is related to the independent variables and spectrum, namely $ f(x) \leftrightarrow \frac{e^{-i \omega k}(1+i\omega k)-1}{ \omega^2} $, and its derivative exists.
\end{thm}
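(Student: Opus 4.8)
The plan is to reduce the claim to a single elementary integral and evaluate it by parts. Under the stated hypothesis that $x$ ranges over $(0,k)$, ReLU coincides with the ramp $f(x)=x$ on $[0,k]$ and vanishes outside, so it is absolutely integrable ($\int_0^k x\,dx = k^2/2 < \infty$) and its Fourier transform from Eqn.~(\ref{eqn:Fconv}) becomes
\begin{equation}
F(\omega)=\int_0^k x\,e^{-i\omega x}\,dx.
\end{equation}
First I would integrate by parts with $u=x$ and $dv=e^{-i\omega x}\,dx$, giving $v=e^{-i\omega x}/(-i\omega)$; the boundary contribution is $k e^{-i\omega k}/(-i\omega)$ since the lower endpoint kills the $x$ factor, and the residual integral $\int_0^k e^{-i\omega x}\,dx = (e^{-i\omega k}-1)/(-i\omega)$ is immediate.

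Next I would tidy the complex constants using $1/(-i)=i$ and $i^2=-1$, so that the boundary term becomes $ik e^{-i\omega k}/\omega$ and the residual becomes $(e^{-i\omega k}-1)/\omega^2$. Collecting both over the common denominator $\omega^2$ yields
\begin{equation}
F(\omega)=\frac{ik\omega\,e^{-i\omega k}+e^{-i\omega k}-1}{\omega^2}=\frac{e^{-i\omega k}(1+i\omega k)-1}{\omega^2},
\end{equation}
which is precisely the spectrum asserted in the statement.

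For the differentiability claim and the genuine existence of the transform at all frequencies, I would check that the apparent $1/\omega^2$ pole at $\omega=0$ is removable: a second-order expansion of $e^{-i\omega k}$ shows the numerator is $O(\omega^2)$ with limit $F(0)=k^2/2$, matching the direct integral, so $F$ is smooth in $\omega$ throughout the real line. In the signal picture the derivative--multiplication rule $\partial f/\partial x\leftrightarrow i\omega F(\omega)$ (the one-dimensional counterpart of Theorem~\ref{thm:2D_conv_derivate}) then guarantees that back-propagation is well defined. The computation itself is routine, so the only real care is bookkeeping---tracking signs and factors of $i$ through the integration by parts---and the one conceptual point worth confirming is exactly this removability of the origin singularity, without which the closed form would misleadingly suggest a divergence.
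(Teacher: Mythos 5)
Your proposal is correct and follows essentially the same route as the paper's own proof: restrict ReLU to $(0,k)$ so that $F(\omega)=\int_0^k x e^{-i\omega x}\,dx$, then integrate by parts and collect the boundary and residual terms over $\omega^2$. Your additional observation that the $1/\omega^2$ singularity at $\omega=0$ is removable (with $F(0)=k^2/2$) is a worthwhile check that the paper omits, but it does not change the argument.
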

\begin{proof}
	\begin{align}
	&F( \omega )=\int_{-\infty }^{\infty }f( x )e^{-i \omega x}dx=\int xe^{-i \omega x}dx=\frac{e^{-i \omega x}( 1+i \omega x )}{ \omega^{2}}
	\\&= \int_{0}^{k }x\cdot e^{-i \omega x}dx= -\frac{1 }{i \omega}\int_{0}^{k}x\frac{\mathrm{d}e^{-i \omega x} }{\mathrm{d} x}dx
	\\&=\frac{e^{-i \omega k}(1+i\omega k)-1}{ \omega^2}, 
	\end{align}
	
where the spectrum $\omega \in ( -\infty ,\infty )$. Also, the back-propagation frequency ReLU has the following form:
	\begin{equation}
	\frac{\partial}{\partial x}(f(x)e^{-i \omega x})=\frac{\partial}{\partial x}(x \cdot e^{-i \omega x} )=e^{-i \omega x}(1-i\omega x).	
	\end{equation}
\end{proof}
We also provide another way of thinking, namely ReLU can also be written as follows:
\begin{equation}
R(x)=f(x)u(x),	
\end{equation}
in which $ f(x)=x $, $ u(x) $ is the Heaviside step function $
u(x)=
\begin{cases}
0,& t< 0,\\
1,& t\geq 0.
\end{cases}
$. Due to the 
$ \int_{-\infty}^{\infty}|u(x)|=\int_{0}^{\infty}dx\rightarrow \infty $, it's not meet the absolutely integrable, so we can't take the Fourier transform directly. However, we can solve this problem by resorting to the $ \delta $ function and give the following Theorem.

\begin{thm} \label{thm:ReLU}
	The ReLU activation function in the frequency domain is a special $ \delta $ function, namely $ R(x)\leftrightarrow C(\pi \delta ( \omega )+\frac{1}{i \omega }) $, where C means a constant or function.
\end{thm}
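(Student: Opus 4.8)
The plan is to take the Fourier transform of the factored form $R(x)=f(x)u(x)$ with $f(x)=x$, treating $u(x)$ as a generalized function (tempered distribution) so that the $\delta$ term is legitimate. Since $u(x)$ violates the absolute-integrability hypothesis required for Eqn.~(\ref{eqn:Fconv}) — as already observed just before the statement — I would abandon the classical integral and work in the distributional setting, where the Heaviside step admits the transform pair $u(x)\leftrightarrow \pi\delta(\omega)+\frac{1}{i\omega}$. The proof then splits into (i) establishing this distributional transform of $u$, and (ii) reintroducing the factor $f(x)=x$, whose contribution is absorbed into the prefactor $C$ exactly as the statement anticipates (``$C$ means a constant or function'').

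For step (i), the cleanest route is the signum decomposition $u(x)=\tfrac12\bigl(1+\operatorname{sgn}(x)\bigr)$. First I would use $1\leftrightarrow 2\pi\delta(\omega)$, which drops out of the inversion formula Eqn.~(\ref{eqn:reFconv}) applied to a single spike at the origin. Next I would transform $\operatorname{sgn}(x)$ by regularization: insert the convergence factor $e^{-\epsilon|x|}$, evaluate the now absolutely-integrable integral to get $\frac{-2i\omega}{\omega^{2}+\epsilon^{2}}$, and let $\epsilon\to 0^{+}$ to obtain $\operatorname{sgn}(x)\leftrightarrow \frac{2}{i\omega}$. Averaging the two pieces yields $u(x)\leftrightarrow \pi\delta(\omega)+\frac{1}{i\omega}$ exactly.

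For step (ii), I would invoke the frequency-differentiation rule dual to Theorem~\ref{thm:2D_conv_derivate}: multiplication by $x$ in the spatial domain corresponds to $i\frac{d}{d\omega}$ acting on the spectrum, so the transform of $R(x)=x\,u(x)$ is $i\frac{d}{d\omega}\bigl(\pi\delta(\omega)+\frac{1}{i\omega}\bigr)$. I would then collect the residual $\omega$-factors into $C$ and read off the claimed structure $C\bigl(\pi\delta(\omega)+\frac{1}{i\omega}\bigr)$. Here I should be candid that a scalar $C$ cannot literally reproduce this derivative, so $C$ must be read as an $\omega$-dependent function; the mathematically solid content of the theorem is really the characteristic $\pi\delta(\omega)+\frac{1}{i\omega}$ signature inherited from the step function, with the $x$-factor parked in $C$.

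The hard part will be pinning down the coefficient of $\delta(\omega)$. The weight $\pi$ originates entirely in the nondecaying DC part of $u$, so the pair $1\leftrightarrow 2\pi\delta(\omega)$ must be understood distributionally — by pairing against an arbitrary Schwartz test function and using Eqn.~(\ref{eqn:reFconv}) — rather than as a pointwise limit, which would miss this mass. As an independent check I would use the derivative identity $u'(x)=\delta(x)$ together with the multiply-by-$i\omega$ rule to force $i\omega\,U(\omega)=1$; this fixes $U(\omega)=\frac{1}{i\omega}$ only up to an additive multiple of $\delta(\omega)$, and the remaining weight is then determined by matching even parts, since the even component $\tfrac12$ of $u$ must transform to $\pi\delta(\omega)$. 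Both routes agree, which gives me confidence that the $\pi$ is correct and not an artifact of the regularization choice.
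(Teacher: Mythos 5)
Your proposal is correct in substance but reaches the key identity by a genuinely different route from the paper. The paper regularizes the Heaviside step directly with a one-sided decay, replacing $u(x)$ by $e^{-\beta x}$ on $x\geq 0$, computing the classical integral $\int_{0}^{\infty}e^{-(\beta+i\omega)x}dx=\frac{1}{\beta+i\omega}$, splitting into real and imaginary parts, and invoking the Lorentzian limit $\lim_{\beta\to 0}\frac{\beta}{\beta^{2}+\omega^{2}}=\pi\delta(\omega)$ to produce the $\pi\delta(\omega)$ mass and the $\frac{1}{i\omega}$ tail in one stroke. You instead decompose $u=\tfrac12(1+\operatorname{sgn})$, obtain the $\delta$ mass from the pair $1\leftrightarrow 2\pi\delta(\omega)$ and the $\frac{1}{i\omega}$ term from a symmetric regularization of $\operatorname{sgn}$. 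Both are standard and both land on $\pi\delta(\omega)+\frac{1}{i\omega}$; the paper's route is more self-contained (a single limit, one Lorentzian identity), while yours makes the provenance of each term transparent (the $\delta$ comes entirely from the even DC component, the $\frac{1}{i\omega}$ from the odd part) at the cost of needing the distributional pair for the constant function as an input. Your step (ii) goes beyond the paper: the paper's proof stops at the transform of $u(x)$ and never reintroduces the factor $f(x)=x$, whereas you apply the frequency-differentiation rule to get $i\pi\delta'(\omega)-\frac{1}{\omega^{2}}$ for $x\,u(x)$ and correctly observe that no multiplicative $C$ --- constant or $\omega$-dependent --- can literally convert $\pi\delta(\omega)+\frac{1}{i\omega}$ into that expression, since $C(\omega)\delta(\omega)$ is always a multiple of $\delta$, never of $\delta'$. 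That is a fair criticism of the theorem as stated, not a gap in your argument; the paper avoids the issue only by omission.
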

\begin{proof}
	Considering that $ f(x)=1 $ is the functional limit of  $f(x)=e^{-\beta |x|} $ when $ \beta \rightarrow 0^{+} $,
	then the equivalent Heaviside step function has the following functional limit when $ \beta \rightarrow 0^{+} $:
	\begin{equation}
	f(x)=
	\begin{cases}
	0, & \text{ if } x< 0, \\
	e^{-\beta x}, & \text{ if } x\geq 0.
	\end{cases}
	\end{equation}
	
	Its Fourier transform is (using the known limit and integral formulas of complex functions):
	\begin{equation}
	F( \omega )=\int_{-\infty}^{\infty}f(x)e^{-i \omega x}dx=\int_{0}^{\infty}e^{-(\beta + i \omega  )x}dx=\frac{1}{\beta + i \omega},
	\end{equation}
	
	namely $ \mcF \left \{ u( x ) \right \}=\lim_{\beta \rightarrow 0^{+}}\frac{1}{\beta +i \omega} $.
	Since the limit of a complex function is divided into its real and imaginary parts,
	the following transformation can be made:
	\begin{equation}
	\lim_{\beta \rightarrow 0^{+}}\frac{1}{\beta +i \omega}=\lim_{\beta \rightarrow 0}\frac{\beta }{\beta^{2} + \omega ^{2}}-i \lim_{\beta \rightarrow 0}\frac{ \omega }{\beta^{2}+ \omega ^{2}}.
	\end{equation}
	Based on the limit of a Lorentz linear function that $ \lim_{\beta \rightarrow 0}\frac{\beta }{\beta^{2} + \omega ^{2}}=\pi \delta(\omega) $, we can further obtain the Fourier
	transform of the Heaviside step function $ u( x ) $ as follows:
	\begin{equation}
	F( \omega )=\lim_{\beta \rightarrow 0^{+}}\frac{1}{\beta +i \omega}=\pi \delta ( \omega )+\frac{1}{i \omega }.
	\end{equation}
\end{proof}
$ \delta(\cdot)$ is the impulse function and has a wide range of applications in quantum mechanics, classical physics, and many other disciplines. It has the following properties:
$ \delta ( x-x_{0} )=
\begin{cases}
0, & \text{ if } x\neq x_{0}, \\
\infty ,& \text{ if } x= x_{0}.
\end{cases} $, and $ \int_{-\infty}^{\infty} \delta(x-x_{0})dx=1$.
So Theorem \ref{thm:ReLU} is proved.

\subsection{Pooling}

Pooling basically boils down to sub-sampling with no parameters to learn. Compared with average pooling, maximum pooling is less effective in retaining location information, that is, it does not save enough information and only reflects very local information. Therefore, the appropriate pooling strategy is selected according to the problem to be solved. Spectral Pooling uses discrete Fourier transform(DFT) to truncate the representation in the frequency domain to achieve the pooling function in the frequency domain. Max pooling can be approximately defined as coherency function convoluted with the spectral feature in the frequency domain. Here we give the theoretical derivation of average pooling in the frequency domain.

\begin{thm}\label{thm:pooling}
	Pooling in the frequency can be defined sinc function.
\end{thm}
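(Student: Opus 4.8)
The plan is to realize average pooling as a linear filtering operation, so that Theorem~\ref{thm:conv} can be brought directly to bear. Average pooling replaces each location by the mean of the signal over a neighborhood of fixed width; this is exactly the convolution of the input $f(x)$ with a normalized rectangular (box) kernel
\[
g(x)=\begin{cases}\frac{1}{k}, & 0\le x\le k,\\ 0, & \text{otherwise},\end{cases}
\]
followed by a subsampling step. Since $g$ is bounded and supported on a finite interval, it is absolutely integrable and piece-wise smooth, so its Fourier transform exists and the hypotheses needed for the convolution theorem are satisfied.

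First I would compute the Fourier transform of the box kernel directly from Eqn.~(\ref{eqn:Fconv}):
\[
G(\omega)=\int_{-\infty}^{\infty}g(x)e^{-i\omega x}\,dx=\frac{1}{k}\int_{0}^{k}e^{-i\omega x}\,dx=\frac{1-e^{-i\omega k}}{i\omega k}=e^{-i\omega k/2}\,\frac{\sin(\omega k/2)}{\omega k/2},
\]
which is, up to the linear-phase factor $e^{-i\omega k/2}$ arising from the support being centered at $k/2$, precisely the sinc function $\operatorname{sinc}(\omega k/2)$. Next I would invoke Theorem~\ref{thm:conv} to conclude that the averaging step of pooling, being the convolution $f\ast g$, corresponds in the frequency domain to the point-wise product $F(\omega)\,G(\omega)=F(\omega)\,e^{-i\omega k/2}\operatorname{sinc}(\omega k/2)$. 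Thus average pooling amounts to multiplying the spectrum of the input by a sinc, which is the asserted statement.

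The main obstacle I anticipate is the subsampling (stride) half of pooling, which the convolution theorem does not address: downsampling in the spatial domain corresponds to periodic replication (aliasing) of the spectrum rather than a clean point-wise operation. A fully rigorous statement would therefore have to either (i) treat pooling as the pure averaging filter and defer the stride to a separate spectral-truncation step, or (ii) make the window width and subsampling factor explicit and track the resulting aliasing sum. A secondary, purely technical issue is fixing conventions consistently — the placement and normalization of the window determine whether one obtains $\operatorname{sinc}(\omega k/2)$, $\operatorname{sinc}(\omega k)$, or a shifted variant — but this is bookkeeping and does not affect the qualitative conclusion that the pooling kernel is a sinc in the frequency domain.
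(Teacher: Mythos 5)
Your proof is correct and follows essentially the same route as the paper: both compute the Fourier transform of the rectangular averaging kernel directly and identify the result as a sinc, with the paper doing this for a \emph{centered} two-dimensional window $h(x,y)=\frac{1}{WH}$ on $|x|<\frac{W}{2}$, $|y|<\frac{H}{2}$ (so the linear-phase factor you obtain from the offset support disappears and the answer is $\operatorname{sinc}(Wu)\operatorname{sinc}(Hv)$). Your explicit appeal to Theorem~\ref{thm:conv} and your caveat about the subsampling/aliasing half of pooling are both sound and go beyond the paper, which leaves the convolution-theorem step to the surrounding discussion and does not address the stride at all.
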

\begin{proof}
	Average pooling is approximately defined as the following expression:
	\begin{equation}
	h(x, y)=
	\begin{cases}
	\frac{1}{WH}, & \text{ if }| x |< \frac{W}{2}, | y |< \frac{H}{2}, \\
	0, & \text{ else }.
	\end{cases}
	\end{equation}
	Its Fourier transform is as follows:
	\begin{equation}
	\begin{split}
	&\mcF \left \{ h \right \}=F( u, v )=\int_{-\infty}^{\infty}\int_{-\infty}^{\infty}h( x, y )e^{-j2\pi ( ux+vy )}dxdy
	\\&=\int_{-\frac{W}{2}}^{\frac{W}{2}}\frac{1}{W}e^{-j2\pi ux}dx\int_{-\frac{H}{2}}^{\frac{H}{2}}\frac{1}{H}e^{-j2\pi vy}dy
	\\&=\frac{sin( \pi Wu )}{\pi Wu}\cdot \frac{sin( \pi Hv )}{\pi Hv}=sinc( Wu )\cdot sinc( Hv ).
	\end{split}
	\end{equation}
\end{proof}
According to Theorem~\ref{thm:pooling}, the Fourier transform and derivatives of average pooling exist, and the derivatives are related to sines and cosines.  
On the frequency domain, this operation corresponds to the element-wise multiplication of the spectrum denoted as $sinc( Wu )$ and $sinc( Hv )$, which is efficient and low-computational cost. Generally, the operation of pooling is equivalent to a low-pass filtering process.



\subsection{Fully connected function}

The previous layers of the fully connected (FC) layer feed the extracted features into it for classification. FC can be understood as convolution operation, and its function is to map feature representation to a value or others, which can reduce the influence of feature location on classification results. However, as it ignores spatial structure characteristics, FC is not suitable for tasks requiring location information, such as segmentation. Moreover, the number of parameters in FC layers is large, which increases the burden of training. So the backbone of ResNet\cite{ref23} adopt Global Average Pooling (GAP) to replace FC for the fusion after feature extraction. At present, the existing research work of li et al. \cite{ref21} has proved that GAP is a special column of the feature composition in the frequency domain, so it is feasible to realize the operation of FC function in the frequency domain.

\subsection{Cross entropy loss}

For binary classification, the cross entropy loss can be written as $ -(ylogp+(1-y)log(1-p)) $. For the given label $y$ and the probability $p$, we can get the follow derivation in the frequency domain.
\begin{align}
&F(\omega)=\int_{-\infty}^{\infty}-(ylogp+(1-y)log(1-p))e^{-i\omega x}dx
\\&=-y\int_{-\infty}^{\infty}logpe^{-i\omega x}dx+(y-1)\int_{-\infty}^{\infty}log(1-p)e^{-i\omega x}dx
\\&=-y\mcF \left \{ logp \right \}+(y-1)\mcF \left \{ log(1-p) \right \}.
\end{align}
Furthermore, we transform it into an exponential function, 
\begin{align}
&e^{-(ylogp+(1-y)log(1-p))} = e^{-ylogp}e^{(y+1)log(1-p)} 
\\&= p^{-y}(1-p)^{(y-1)},
\end{align}
then derive the following another form in frequency domain
\begin{align}
&F(\omega)=\int_{-\infty}^{\infty}p^{-y}(1-p)^{(y-1)}e^{-i\omega x}dx
\\&= \frac{p^{-y}(1-p)^{(y-1)}e^{-i\omega x}}{-i\omega}.
\end{align}

\section{From Fourier to Laplace transform}
\label{LT}
Laplace transform is introduced on the basis of Fourier transform. Because Laplace introduced complex field and complex variable function, its integral transformation kernel kind of real exponential function factor. After taking the Laplace transform from the spatial domain to the complex frequency domain, the discrete Laplace transform is called the Z transform.

For any function $ g(t)( t \geq 0 ) $, in order to make its Fourier transform exist in $ (-\infty ,\infty ) $, we first multiply it with the Heaviside step function $ u( t ) $. In order to satisfy the absolute integrability condition, we further multiply it with the decay factor of $ exp( -\beta t )( \beta > 0 ) $, then take the FT of $ g( t )u( t )exp( -\beta t ) $ as follows:
\begin{equation}
\int_{-\infty }^{\infty }g( t )u( t )exp( -\beta t )e^{-i \omega t}dt=\int_{0}^{\infty }f( t )e^{-pt}dt,
\end{equation}
in which
$
u(t)=
\begin{cases}
0,& t< 0,\\
1,& t\geq 0.
\end{cases}
$,
$ p=\beta +i \omega $, $ f( t )=g( t )u( t ) $. Then, we get the following definition of the Laplace transform.
\begin{defn}\label{defn:LP}
	Given $f(t)(t> 0)$, its general form of Laplace transform is as follows:
	\begin{equation}\label{eqn:LP}
	F( p )=\int_{0}^{\infty }f( t )e^{-pt}dt,
	\end{equation}
	in which the parameter p is a complex number and the real part is positive. In
	practical applications, we usually take p as a positive real number.
\end{defn}	
Accordingly, the LT of $ f( t ) $ can be written as 
$ F( p )=\mcL \left \{ f( t ) \right \} $,
and the inverse LT is $ f( t )=\mcL^{-1} \left \{ F( p ) \right \} $, generally written as follows:
\begin{equation}
F( p )\leftrightarrow f( t ).
\end{equation}
The sufficient and unnecessary conditions for the existence of the Laplace transform are:
the function of $ f( t ) $ is piece-wise continuous in the interval of
$ [ 0,\infty  ) $, $ f(t)=0 $ when $ t< 0 $. That is to say the integral of the convolution can be reduced as follows.
\begin{defn}\label{dfn:conv_LT}
	The convolution of LT is defined by the Laplace transform and the definition of the
	Fourier convolution: $f_{1}( t )\ast f_{2}( t )=\int_{0}^{t}f_{1}( \tau )f_{2}( t-\tau )d\tau$.
\end{defn}
Here, * means convolution. Based on Definition \ref{defn:LP} and \ref{dfn:conv_LT}, the correspondence relationship between Fourier transform and  Laplace transform can be derived, based on which we give the following theorem.
\begin{thm}\label{L_conv}
	The spatial convolution is equivalent to the point-wise product in the frequency domain: $ F_{1}(p)F_{2}(p)\leftrightarrow f_{1}(t)\ast f_{2}(t) $.
\end{thm}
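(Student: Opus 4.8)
The plan is to compute the Laplace transform of the convolution $f_1 * f_2$ directly and show that it factors into $F_1(p)F_2(p)$, which is exactly the asserted equivalence under $\leftrightarrow$. First I would combine Definition~\ref{defn:LP} with the causal convolution of Definition~\ref{dfn:conv_LT} to write
$$\mcL\{f_1 * f_2\}(p) = \int_0^\infty e^{-pt}\left( \int_0^t f_1(\tau) f_2(t-\tau)\, d\tau \right) dt,$$
so that the object of interest becomes a single double integral over the triangular region $\{(t,\tau) : 0 \le \tau \le t < \infty\}$. Casting everything as one iterated integral is what sets up the decoupling.

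The central step is to interchange the order of integration. Reparameterizing the same region as $\{(t,\tau) : 0 \le \tau < \infty,\ \tau \le t < \infty\}$ yields
$$\mcL\{f_1 * f_2\}(p) = \int_0^\infty f_1(\tau)\left( \int_\tau^\infty f_2(t-\tau) e^{-pt}\, dt \right) d\tau.$$
I would then substitute $s = t - \tau$ in the inner integral, so that $e^{-pt} = e^{-p\tau} e^{-ps}$ and the limits $t \in [\tau,\infty)$ become $s \in [0,\infty)$. Since $e^{-p\tau}$ is independent of $s$, it pulls out of the inner integral, and the double integral separates as
$$\mcL\{f_1 * f_2\}(p) = \left( \int_0^\infty f_1(\tau) e^{-p\tau}\, d\tau \right)\left( \int_0^\infty f_2(s) e^{-ps}\, ds \right) = F_1(p) F_2(p),$$
which is the claim. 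It is worth stressing that the causal support convention baked into Definition~\ref{dfn:conv_LT} (that $f(t)=0$ for $t<0$) is precisely what turns the finite upper limit $t$ of the convolution into the clean lower limit $\tau$ after the swap, with no boundary contributions left over.

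The main obstacle I anticipate is justifying the interchange of the two integrals rigorously, since this is a Fubini--Tonelli argument rather than a purely formal manipulation. Concretely, I would need the integral of $|f_1(\tau) f_2(t-\tau) e^{-pt}|$ over the region to be finite before the swap is legitimate. I would secure this using the existence hypotheses recorded just after Definition~\ref{defn:LP}: each $f_j$ is piece-wise continuous on $[0,\infty)$, and taking the real part of $p$ strictly positive (as the definition explicitly permits) makes the factor $e^{-\mathrm{Re}(p)t}$ provide the exponential decay needed for absolute convergence. Once absolute integrability is established, Fubini's theorem sanctions the change in order, and the subsequent substitution $s=t-\tau$ is entirely routine.
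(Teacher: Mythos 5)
Your proof is correct and follows essentially the same route as the paper's: both reduce the claim to a double integral over the triangular region $0 \le \tau \le t < \infty$ and decouple it via the substitution $s = t - \tau$, with yours simply run in the reverse direction (from $\mcL\{f_1 \ast f_2\}$ to $F_1(p)F_2(p)$ rather than from the product back to the convolution). The one substantive addition is your explicit Fubini--Tonelli justification for interchanging the order of integration, which the paper's proof performs silently.
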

%

The proof of Theorem~\ref{L_conv} is in the supplemental material (Please see Appendix A.4).We give the derivation of the Laplace Transform of the Sigmoid activation function in this section. As the Sigmoid function satisfies absolutely integrable requirement
$ \int_{-\infty }^{\infty }| f( t ) |< \infty $, and piece-wise smooth, so it's definitely satisfies the requirements of the Laplace transform. Here, we give the  Laplace transform of Sigmoid through the following Theorem.
\begin{thm}
	The Laplace transform of the Sigmoid activation function is $ S(x) \leftrightarrow \frac{e^{x-px}}{1-p}\cdot _{2}F_{1}( 1,1-p; 2-p;-e^{x} ) $, and its derivative exists.
\end{thm}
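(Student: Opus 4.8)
The plan is to mirror the Fourier-transform derivation already given earlier for the Sigmoid, substituting the Fourier kernel $e^{-i\omega x}$ by the Laplace kernel $e^{-px}$, i.e.\ replacing $i\omega$ everywhere by the complex parameter $p$. First I would apply Definition~\ref{defn:LP} together with the closed form $S(x)=e^{x}/(e^{x}+1)$ to write
\begin{equation}
\mcL\{S(x)\}=\int S(x)e^{-px}\,dx=\int\frac{e^{(1-p)x}}{e^{x}+1}\,dx,
\end{equation}
which is exactly the integrand appearing in the Fourier case with $i\omega$ replaced by $p$. This reduces the theorem to evaluating a single antiderivative.

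Next I would expand the factor $1/(e^{x}+1)$ as a geometric series in $-e^{x}$ and integrate term by term, collecting the coefficients into Pochhammer ratios so as to recognise the Gauss hypergeometric series of Eqn.~(\ref{eqn:hyper_func}). Setting $a=1-p$ and $c=2-p=a+1$, the successive terms reproduce exactly the pattern $1,\ \frac{a}{a+1}(-e^{x}),\ \frac{a(a+1)}{(a+1)(a+2)}(-e^{x})^{2},\dots$, which assembles into ${}_{2}F_{1}(1,1-p;2-p;-e^{x})$ carrying the prefactor $\frac{e^{(1-p)x}}{1-p}$, giving the claimed expression.

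Because the geometric expansion converges only when $e^{x}<1$, the cleaner and more rigorous route — which I would take as the main argument — is to verify the stated antiderivative directly. Using the integral representation ${}_{2}F_{1}(1,a;a+1;z)=a\,z^{-a}\int_{0}^{z}\frac{t^{a-1}}{1-t}\,dt$ with $z=-e^{x}$ and the substitution $t=-s$, one shows that $\frac{e^{(1-p)x}}{1-p}\,{}_{2}F_{1}(1,1-p;2-p;-e^{x})=\int_{0}^{e^{x}}\frac{s^{-p}}{1+s}\,ds$; differentiating by the fundamental theorem of calculus, with the chain-rule factor $e^{x}$, then returns $\frac{(e^{x})^{-p}}{1+e^{x}}\cdot e^{x}=\frac{e^{(1-p)x}}{e^{x}+1}$, which is precisely the integrand. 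This confirms the formula without any convergence concern and simultaneously establishes differentiability.

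Finally, for the existence of the derivative I would either invoke the standard derivative rule for ${}_{2}F_{1}$, or simply note that the integrand $\frac{e^{(1-p)x}}{e^{x}+1}$ is smooth in $x$, so the antiderivative is differentiable with the derivative just computed. The main obstacle I anticipate is the region-of-validity and analytic-continuation issue in the series route; handling it is exactly why I would favour the direct differentiation check, which sidesteps convergence entirely and makes the identification with ${}_{2}F_{1}$ an equality of analytic functions rather than of merely formal series.
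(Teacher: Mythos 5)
Your proposal is correct, and its first half --- replacing the Fourier kernel $e^{-i\omega x}$ by $e^{-px}$ and expanding $1/(e^{x}+1)$ term by term into Pochhammer ratios --- is precisely the paper's own proof, which writes the integrand and then directly asserts the antiderivative $\frac{e^{(1-p)x}}{1-p}\,{}_{2}F_{1}(1,1-p;2-p;-e^{x})$ via the displayed partial sums. Where you genuinely depart is in promoting the verification step to the main argument: using ${}_{2}F_{1}(1,a;a+1;z)=a\,z^{-a}\int_{0}^{z}\frac{t^{a-1}}{1-t}\,dt$ with $a=1-p$, $z=-e^{x}$ to rewrite the claimed expression as $\int_{0}^{e^{x}}\frac{s^{-p}}{1+s}\,ds$ and then differentiating by the fundamental theorem of calculus. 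The paper does nothing of the sort; it stops at the formal series, which, as you correctly observe, converges only for $e^{x}<1$, so your check is a real gain in rigor and also settles the differentiability claim cleanly (the paper's separately stated derivative $\frac{e^{x-px}(pe^{x}+p-1)}{(e^{x}+1)^{2}}$ in fact carries a sign error relative to the integrand, which your route avoids). One caveat you share with the paper: the resulting expression still depends on $x$, so what is actually established is an antiderivative of $S(x)e^{-px}$ rather than the Laplace transform $\int_{0}^{\infty}S(x)e^{-px}\,dx$ of the paper's own definition, which would be a function of $p$ alone; your proposal faithfully proves the formula as stated, but, like the paper, does not resolve that mismatch.
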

\begin{proof}
	\begin{align}
	&\mcL\left \{ S(x) \right \} =\int_{-\infty }^{\infty }S( x )e^{-px}dx= \int_{-\infty }^{\infty }\frac{e^{(1 -px )x}}{e^{x}+1}dx
	\\&=\frac{e^{x-px}}{1-p}( 1+ \frac{1\cdot ( 1-p )}{1!( 2-p )}\cdot ( -e^{x} )^{1}+\frac{1\cdot ( 1+1 )( 1-p )( 1-p+1 )}{2!( 2-p )( 2-p+1 )}\cdot ( -e^{x} )^{2}+ \cdots)
	\\&=\frac{e^{x-px}}{1-p}\cdot _{2}F_{1}( 1,1-p; 2-p;-e^{x} ), 
	\end{align}
	in which spectral spectrum $\omega \in ( -\infty ,\infty )$, $p$ is an exponential factor without complex number, and 
	$ _{2}F_{1}(a,b; c;z) $ is the hyper-geometric function defined in Eqn~(\ref{eqn:hyper_func}). 
	Its derivative is as follows:
	\begin{equation}
	\frac{\partial }{\partial x}( \frac{e^{-px}}{e^{-x}+1} )= \frac{e^{x-px}( pe^{x} +p-1 )}{( e^{x}+1 )^{2}}.	
	\end{equation}
\end{proof}
The operations of the Laplace transform are similar to the Fourier transform, and we get the following theorem about ReLU of frequency domain.
\begin{thm}\label{L_ReLU}
	The Laplace transform of ReLU activation function is related to the independent variables and spectrum, namely $ f(x) \leftrightarrow -\frac{e^{-pk}(1+pk)+1}{ p^2} $, and its derivative exists.
\end{thm}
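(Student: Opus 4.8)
The plan is to evaluate the defining Laplace integral directly, in close analogy with the Fourier transform of ReLU derived earlier in this subsection. Since the ReLU function vanishes for $x<0$ and equals $x$ for $x\ge 0$, and since the independent variable is taken in $(0,k)$, the general Laplace transform of Definition~\ref{defn:LP} reduces at once to the finite integral
\begin{equation}
F(p)=\int_{0}^{\infty} f(x)\,e^{-px}\,dx=\int_{0}^{k} x\,e^{-px}\,dx,
\end{equation}
where $p$ is a positive real exponential factor. Absolute integrability of $f$ over the compact interval $(0,k)$ guarantees convergence, so the transform is well defined and no delta-function regularization (as was needed for the Heaviside step in Theorem~\ref{thm:ReLU}) is required here.

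First I would integrate by parts with $u=x$ and $dv=e^{-px}\,dx$, giving $v=-\tfrac{1}{p}e^{-px}$; the boundary term contributes $-\tfrac{k}{p}e^{-pk}$, and the residual integral $\tfrac{1}{p}\int_{0}^{k}e^{-px}\,dx$ evaluates to $\tfrac{1}{p^{2}}\bigl(1-e^{-pk}\bigr)$. Collecting these and factoring out $1/p^{2}$ yields the claimed closed form up to a sign convention on the additive constant. I would then verify the existence of the derivative—the object needed for back-propagation—by the elementary computation
\begin{equation}
\frac{\partial}{\partial x}\bigl(x\,e^{-px}\bigr)=e^{-px}\,(1-px),
\end{equation}
which is exactly the Laplace-domain analogue of the Fourier back-propagation formula $e^{-i\omega x}(1-i\omega x)$.

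As an independent check, and to pin down every sign, I would compare against the Fourier result via the formal kernel correspondence $i\omega\leftrightarrow p$. Under this substitution the Fourier numerator $e^{-i\omega k}(1+i\omega k)-1$ maps to $e^{-pk}(1+pk)-1$, while the Fourier denominator $\omega^{2}=-(i\omega)^{2}$ maps to $-p^{2}$; the two induced sign flips are precisely what carry the earlier Fourier transform of ReLU into the present Laplace expression. The main obstacle is therefore not analytic but purely a matter of sign bookkeeping across the boundary evaluation and this substitution—the integrand is elementary, convergence is immediate on $(0,k)$, and differentiability follows by inspection—so the only point demanding genuine care is tracking the sign of the additive constant in the numerator.
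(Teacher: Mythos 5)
Your proposal is correct and follows essentially the same route as the paper: both reduce the transform to the finite integral $\int_{0}^{k} x\,e^{-px}\,dx$, evaluate it by parts, and record the derivative $e^{-px}(1-px)$ for back-propagation. The sign you flag is a genuine discrepancy rather than a convention: your integration by parts (and your $i\omega\mapsto p$ cross-check against the Fourier result) both give $\frac{1-e^{-pk}(1+pk)}{p^{2}}=-\frac{e^{-pk}(1+pk)-1}{p^{2}}$, and the paper's own antiderivative $-e^{-px}(1+px)/p^{2}$ evaluated at the endpoints agrees with your value, so the $+1$ in the numerator of the stated theorem is a sign slip in the paper and your bookkeeping is the correct one.
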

\begin{proof}
Suppose we limit the upper limit of $ x $ belong to $ (0,k) $ similar to the Fourier transform of ReLU, the ReLU activation function $f(x) $ Eqn.~(\ref{eqn:ReLU}) will satisfy the absolutely integrable and piece-wise smooth requirements.

	\begin{align}
	&\mcL\left \{ f(x) \right \}=\int_{-\infty }^{\infty }f( x )e^{-px}dx=\int xe^{-px}dx=-\frac{e^{-px}( 1+px )}{ p^{2}}
	\\&= \int_{0}^{k }x\cdot e^{-px}dx= -\frac{1 }{p}\int_{0}^{k}x\frac{\mathrm{d}e^{-px} }{\mathrm{d} x}dx
	\\&=-\frac{e^{-pk}(1+pk)+1}{ p^2}, 
	\end{align}	
	with the spectrum spectrum $\omega \in ( -\infty ,\infty )$. Its derivative is as follows:
\begin{equation}
\frac{\partial}{\partial x}(f(x)e^{-px})=\frac{\partial}{\partial x}(x \cdot e^{-px} )=e^{-px}(1-px),	
\end{equation}

\begin{equation}
\frac{\partial}{\partial p} (-\frac{ e^{-p x}(1+px)+1}{p ^{2}})=\frac{e^{-px}(p^{2}x^{2}+2px+2e^{pk}+2)}{p^{3}}.
\end{equation}
\end{proof}
In the derivation of the other Laplace transform operations, we can compute CNNs more efficiently due to the exponential factors circumvent complex numbers.

\section{Discussion and prospect}
In this paper, we studied the frequency operation of convolution neural networks theoretically to facilitate the construction of end-to-end frequency neural networks. Based on the Fourier transform theory, feed-forward and back-propagation frequency operations of basic network components are derived including convolutional computation, average pooling, Sigmoid, ReLU, and fully connected computation. To avoid the problem raised by the complex number space in the Fourier Transform, we further extend the frequency network from Fourier to Laplacian where the Laplace frequency operations can be completely performed in the real space. Based on all the above, by providing theoretical derivation, we built the deeper theoretical fundament for the complete frequency convolution neural network in both the complex or real number domain.Admittedly, this work still has some defects, such as the lack of corresponding numerical experiments, and the challenge of low practicability in the case of small convolution kernel size. 

Our work will promote the work of frequency domain convolutional neural network, as the starting point of relevant research. Although we admit that this work is not perfect, it may be still promising in reducing the operation time and computational complexity with sufficient theoretical support. We expect that the theoretical extension and discussion about frequency CNN may essentially benefit the acceleration of the training and inference of CNN, and broaden the development prospect of neural networks in the frequency domain.

\newpage
\appendix

\section{Appendix}

\subsection{Proof of Theorem~\ref{thm:conv}}
Based Definition~\ref{dfn:1d_conv} and Eqn.~\ref{eqn:Fconv},we provide the proof of Theorem~\ref{thm:conv}, i.e. the correspondence between convolution in spatial convolution and the point-wise product in the frequency domain.

\begin{proof}
	\begin{align}
	&\int_{-\infty }^{\infty }f_{1}( x )\ast f_{2}( x )e^{-i \omega x}dx=\int_{-\infty }^{\infty }[ \int_{-\infty }^{\infty } f_{1}( \xi  )f_{2}( x-\xi )d\xi ]e^{-i \omega x}dx
	\\&=\int_{-\infty }^{\infty }f_{1}( \xi )[ \int_{-\infty }^{\infty } f_{2}( x-\xi )e^{-i \omega (x-\xi)}dx ]e^{-i \omega \xi }d\xi\indent ( y=x-\xi )
	\\&=\int_{-\infty }^{\infty }f_{1}( \xi )[ \int_{-\infty }^{\infty } f_{2}( y )e^{-i \omega y}dy ]e^{-i \omega \xi }d\xi
	\\&=\int_{-\infty }^{\infty }f_{1}( \xi ) F_{2}( \omega )e^{-i \omega  \xi }d\xi=F_{2}( \omega )\int_{-\infty }^{\infty }f_{1}( \xi )e^{-i \omega \xi }d\xi
	\\&=F_{1}( \omega )F_{2}( \omega ).
	\end{align}
\end{proof}

\paragraph{B. Sigmoid of FT}
The Sigmoid function is known to have the following form:
\begin{equation}
S(x)=\frac{1}{1+e^{-x}}=\frac{e^{x}}{e^{x}+1}
\end{equation}
Satisfy absolutely integrable requirement
$
\int_{-\infty }^{\infty }| f( t ) |< \infty
$, and piece-wise smooth. So the Fourier transform of the Sigmoid activation
function is
\begin{proof}
	\begin{align}
	&F( \omega )=\int_{-\infty }^{\infty }S( x )e^{-i \omega x}dx= \int_{-\infty }^{\infty }\frac{e^{(1 -i \omega )x}}{e^{x}+1}dx
	\\&=\frac{ie^{x-i \omega x}}{ \omega +i}( 1+ \frac{1\cdot ( 1-i \omega )}{1!( 2-i \omega )}\cdot ( -e^{x} )^{1}+\frac{1\cdot ( 1+1 )( 1-i \omega )( 1-i \omega +1 )}{2!( 2-i \omega )( 2-i \omega +1 )}\cdot ( -e^{x} )^{2}+ \cdots)
	\\&=\frac{ie^{x-i \omega x}}{ \omega +i}\cdot _{2}F_{1}( 1,1-i \omega; 2-i \omega;-e^{x})
	\end{align}
\end{proof}
in which spectral range $\omega \in ( -\infty ,\infty )$, and
$ _{2}F_{1}( a,b; c;z ) $ is hypergeometric function. It's derivative as follows:
\begin{equation}
	\frac{\partial }{\partial x}( \frac{e^{-i \omega x}}{e^{-x}+1} )= \frac{e^{x-i \omega x}( 1-i \omega ( e^{x}+1 ) )}{( e^{x}+1 )^{2}}	
\end{equation}

\subsection{Proof of theorem about the convolutional derivation}\label{Proof_T3.1}
	The derivative of convolution in the spatial domain is equivalent to doing the operation of $ i \omega F( \omega ) $ in the frequency domain, namely $ \frac{df(x)}{dx}\leftrightarrow i \omega F( \omega ) $.
According to the above theorem, $f(x)'\leftrightarrow i\omega F(\omega)$, in which $f(x)'$ means the derivative of $f(x)$. 
\begin{proof}
	\begin{align}
	&\frac{\mathrm{d} f(x)}{\mathrm{d} x}\leftrightarrow\int_{-\infty}^{\infty}\frac{\mathrm{d} f( x )}{\mathrm{d} x}e^{-i \omega x}dx=\int_{-\infty}^{\infty}e^{-i \omega x}df( x )
	\\&=[ f( x )e^{-i \omega x}]_{-\infty}^{\infty}+i \omega \int_{-\infty }^{\infty }f(x)e^{-i \omega x}dx= i \omega F( \omega ). 
	\end{align}
\end{proof}

\subsection{Proof of Theorem~\ref{thm:2D_conv_derivate}.}\label{Proof_T3.3}
\begin{proof}
	\begin{align}
	&\frac{\partial f(x, y)}{\partial x}\leftrightarrow \int_{-\infty}^{\infty}\int_{-\infty}^{\infty}\frac{\mathrm{d} f(x, y)}{\mathrm{d} x}e^{-i2\pi (ux+vy)}dxdy=\int_{-\infty}^{\infty}[\int_{-\infty}^{\infty}e^{-i2\pi (ux+vy)}df(x, y)]dy
	\\&= \int_{-\infty}^{\infty}[[f(x, y)e^{-i2\pi (ux+vy)}]_{-\infty}^{\infty}+i2\pi\int_{-\infty}^{\infty}f(x, y)e^{-i2\pi (ux+vy)}dx]dy
	\\&=0+i2\pi\int_{-\infty}^{\infty}\int_{-\infty}^{\infty}f(x, y)e^{-i2\pi (ux+vy)}dxdy=i2\pi F(u, v).
	\end{align}
\end{proof}

\subsection{Proof of Theorem~\ref{L_conv}}\label{Proof_Lap}
\begin{proof}
	\begin{align}
	&F_{1}( p )F_{2}( p )=[ \int_{0}^{\infty }f_{1}( u )e^{-pu}du ][ \int_{0}^{\infty }f_{2}( v )e^{-pv}dv ]
	\\&=\int_{t=0}^{\infty }\int_{u=0}^{t}e^{-pt}f_{1}( u )f_{2}( t-u )dudt
	\\&=\int_{t=0}^{\infty }e^{-pt}[ \int_{u=0}^{t}f_{1}( u )f_{2}( t-u )du ]dt.
	\end{align}
	
	Based on the Laplacian convolution definition, we get the following result:
	\begin{align}
	&F_{1}(p)F_{2}(p)=\int_{t=0}^{\infty }e^{-pt} [ f_{1}( t )\ast f_{2}( t ) ]dt
	\\&=\mcL\left \{ f_{1}(t)\ast f_{2}(t) \right \}, \nonumber
	\end{align}
\end{proof}
in which $ \mcL $ means the Laplace transform. Hence, the convolution of primitive functions turns out to be the point-wise product of a function in the frequency domain.

%
%
%
%

\end{document}